\documentclass[conference]{IEEEtran}
\IEEEoverridecommandlockouts
\usepackage{amsmath,amsfonts}
\usepackage{algorithmic}
\usepackage{array}
\usepackage[caption=false,font=normalsize,labelfont=sf,textfont=sf]{subfig}
\usepackage{textcomp}
\usepackage{multirow}
\usepackage{stfloats}
\usepackage{url}
\usepackage{verbatim}
\usepackage{graphicx}
\usepackage{amssymb}
\usepackage{amsthm}
\usepackage{hyperref}
\usepackage{tikz}
\usepackage{float}
\usepackage{subfig}
\usepackage{orcidlink}
\usepackage{cite}
\usepackage{caption}
\usepackage{mathrsfs}
\usepackage{cuted}
\usepackage{booktabs}
\usepackage{tabularx}   
\usepackage{booktabs}   

\usetikzlibrary{arrows.meta, positioning}
\hyphenation{op-tical net-works semi-conduc-tor IEEE-Xplore}
\def\BibTeX{{\rm B\kern-.05em{\sc i\kern-.025em b}\kern-.08em
		T\kern-.1667em\lower.7ex\hbox{E}\kern-.125emX}}
\usepackage{balance}
\hypersetup{
	colorlinks=true,
	linkcolor=blue,
	citecolor=blue
}
\newtheoremstyle{mystyle}
{3pt}
{3pt}
{}
{1em}
{\bfseries}
{}
{.5em}
{}

\theoremstyle{mystyle}
\newtheorem{assumptionx}{Assumption}
\newtheorem{lemmax}{Lemma}
\newtheorem{definitionx}{Definition}
\newtheorem{theoremx}{Theorem}
\newtheorem{remarkx}{Remark}
\newenvironment{assumption}
{\begin{assumptionx}\hspace{0.5em}}
	{\end{assumptionx}}

\newenvironment{lemma}
{\begin{lemmax}}
	{\end{lemmax}}


\newenvironment{theorem}
{\begin{theoremx}\hspace{-0.5em}\textnormal{:}}
	{\end{theoremx}}
\newenvironment{remark}
{\begin{remarkx}\hspace{-0.5em}\textnormal{:}}
	{\end{remarkx}}
\begin{document}
\title{Singularity-free prescribed performance guaranteed
 control for perturbed system}

\author{
    \IEEEauthorblockN{1\textsuperscript{st} Yiwei Liu}
    \IEEEauthorblockA{
        \textit{Chongqing Key Laboratory of Nonlinear Circuits and Intelligent Information Processing} \\
        \textit{College of Westa, Southwest University} \\
        Chongqing 400715, China \\
        ORCID: 0009-0008-1283-8245
    }
}

	\maketitle
	
\begin{abstract}
This paper addresses the prescribed performance control (PPC) challenge for high-order nonlinear systems affected by mismatched disturbances. The research aims to prevent singularity issues arising from error boundary violations during abrupt changes in reference trajectories. We introduce a novel transformation function with infinite-order differentiability at connection points, advancing beyond mere continuous differentiability. Utilizing this transformation function, we develop a comprehensive transformation strategy that ensures: (1) errors remain within prescribed boundaries when reference trajectories are smooth, and (2) errors return to prescribed boundaries within a specified timeframe following abrupt changes in reference trajectories. Additionally, the complexity explosion issue inherent in backstepping design is effectively resolved. Simulation results corroborate the validity of the proposed theoretical advancements.
\end{abstract}

\begin{IEEEkeywords}
prescribed performance control (PPC), error transformation, singularity-free, predefined-time control.
\end{IEEEkeywords}

\section{Introduction}

For decades, nonlinear control has garnered increasing attention due to its extensive applications in real-world systems, including boost converters \cite{10274473,liu2025fixedtimevoltageregulationboost,11048942} and autonomous vehicles \cite{wang2025adaptive,10878311}. However, controlling nonlinear systems often poses significant challenges, particularly when external disturbances are present. These challenges necessitate the development of more robust and adaptive control strategies to ensure system stability and performance. As research in this field continues to evolve, addressing these complexities will be crucial for advancing the capabilities and reliability of modern technological systems.

The strict-feedback form represents the most prevalent class of nonlinear systems. Notably, many nonlinear systems with varying structures can be transformed into strict-feedback systems through diffeomorphic transformations, simplifying the process of controller design. The primary challenge in this context is managing lumped error terms, which typically comprise both external disturbances and system uncertainties, such as actuator faults and hysteresis effects. Current approaches to disturbance handling include adaptive parameter estimation \cite{10233088,LIU2025107422}, disturbance observers \cite{9900364,wang2024hybrid}, and approximations using neural networks (NNs) \cite{9343685,wang2024event} or fuzzy logic systems (FLS) \cite{9954896}. Despite their effectiveness, these methods necessitate additional compensation structures, which can significantly increase system complexity. This complexity underscores the need for innovative solutions that can efficiently handle disturbances while maintaining system simplicity and performance.

Building upon these disturbance handling methods, backstepping can be utilized for controller design. However, when applied to high-order nonlinear systems, traditional backstepping encounters the "complexity explosion" problem, wherein the virtual control laws for higher-order subsystems necessitate derivatives of lower-order subsystems, resulting in rapidly escalating computational loads and symbolic expression complexity. Current solutions primarily consist of: (1) dynamic surface control (DSC) \cite{880994,wangCC} and command filters \cite{10637464}, which employ low-pass filters for derivative approximation; and (2) NNs \cite{gong2025effective,9968116,icca} or FLS \cite{9954896}, which directly approximate complex terms containing virtual controller derivatives. Both approaches introduce increased structural complexity and potential approximation errors that can compromise system stability. Furthermore, NNs and FLS implementations require meticulously designed training strategies to ensure convergence of weights to optimal values, adding another layer of complexity to the control design process. These challenges highlight the need for more efficient methods that maintain stability without excessively complicating the system architecture.

To enhance control performance, backstepping is frequently integrated with prescribed performance control (PPC), which is fundamentally designed to constrain system errors within predefined boundaries, thereby ensuring stable operation \cite{LIU2025107422,Zhang07012025,SUI2025120005}. Existing PPC approaches can be classified into two main categories. The first category involves initial-condition-dependent PPC functions, characterized by simpler transformation laws that necessitate initial errors to be within specified boundaries. This approach offers reduced computational complexity and is effective in situations where initial conditions can be precisely controlled \cite{Zhang07012025,SUI2025120005}. The second category comprises initial-condition-independent PPC functions, which, although inherently more complex, provide greater flexibility by relaxing initial condition requirements. This makes them particularly suitable for applications where initial states are uncertain or dynamic, allowing for robust control despite unpredictable initial conditions \cite{LIU2025107422}. The choice between these strategies depends on the specific needs and constraints of the application, such as computational resources and the predictability of initial conditions, thereby facilitating enhanced robustness and performance in various operational environments.

Notably, PPC serves as a common method for achieving predefined-time stability. By designing performance boundaries as piecewise functions that become constant after a predefined time, system stability can be guaranteed if errors remain within these boundaries. Compared with alternative predefined-time control methods, PPC avoids singularity issues that arise from fractional exponents in traditional approaches, where differentiation produces negative exponents causing instability as errors approach zero \cite{9954896,10529316,10058598}. However, PPC faces a critical challenge: when reference trajectories contain discontinuities, error jumps may violate performance boundaries, causing singularity problems. This scenario frequently occurs in practical applications like boost converters where reference outputs need step changes between different voltage levels.

Current research on PPC singularity primarily addresses boundary violations due to state measurement errors, while relatively little attention has been given to the challenges posed by reference trajectory jumps \cite{GONG2023251,9632614}. To bridge this gap, we propose a singularity-resistant PPC method that introduces three key contributions:

\begin{itemize}
\item First, we design a novel globally smooth transformation function that strictly avoids potential singularity issues caused by nonexistent high-order derivatives.
\item Second, based on the proposed transformation function, we develop a new control strategy that achieves prescribed performance. This guarantees that when the reference trajectory undergoes abrupt changes, the error will return to the prescribed boundary within a predetermined time, while maintaining the error within the boundary when the reference trajectory is smooth.
\item Third, the proposed control scheme features low complexity, specifically manifested in its inherent ability to reject external disturbances without requiring additional strategies, while simultaneously resolving the complexity explosion problem in backstepping design.
\end{itemize}

The structure of this paper is organized as follows: Section~\ref{sec2} formulates the problem and establishes the control objectives. Section~\ref{sec:main_results} develops the control strategy based on backstepping methodology. Section~\ref{sec:stability} provides rigorous stability analysis of the proposed system. Section~\ref{sec5} demonstrates the method's effectiveness through numerical simulations. Section~\ref{sec6} concludes the paper and suggests future research directions.

\textbf{Notations}
$d_S(a, B) = \inf_{b \in B} \|a - b\|$ denotes the smallest distance from the point $a$ to the set $B$.
\section{Problem Formulation}\label{sec2}
First, we consider a higher-order strict-feedback system with positional disturbance terms:

\begin{equation}
\begin{cases}
\dot{x}_{i} = x_{i+1} + d_{i}, & i = 1,2,3,\cdots,n, \\ 
y = x_{1}
\end{cases}
\label{eq:system}
\end{equation}

where $x_{i} \in \mathbb{R}$ represents the system state, $d_{i}$ denotes the mismatched disturbance, and $y$ is the system output. Specifically, when $i = n$, $x_{n+1} = u$, where $u$ is the control input.

\begin{assumption}
The disturbance terms $d_{i}$ are bounded, like \cite{10574398}.
\end{assumption}

\begin{assumption}
The reference trajectory $x_{r}$ is piecewise smooth and may contain finite jump points.
\end{assumption}

Define the tracking error $e_{1} = x_{1} - x_{r}$. The control goal naturally transforms to designing $u$ that drives $e_{1}$ to converge to zero. Following the backstepping principle, we further define virtual errors:

\begin{equation}
e_{i+1} = x_{i+1} - \alpha_{i}, \quad i = 1,2,3,\cdots,n-1
\label{eq:virtual_errors}
\end{equation}

Here, $\alpha_i$ is the virtual control law, and its specific form will be given later in the text. To ensure notational consistency, we define $x_{r} = \alpha_{0}$. This provides a unified definition for both tracking errors and virtual errors:

\begin{equation}
e_{i} = x_{i} - \alpha_{i-1}, \quad i = 1,2,3,\cdots,n.
\label{eq:unified_errors}
\end{equation}

To enhance control performance for higher-order states, we define error boundaries $\rho$. The overall control objective can thus be stated as:

Design the control input $u$ such that each error satisfies the prescribed performance requirements.

\subsection{Singularity Problem Analysis}
To address the singularity problem arising from reference trajectories containing finite jump discontinuities, let us consider a simple step reference trajectory defined as:

\begin{equation}
x_{r}(t) = 
\begin{cases}
C_{1}, & t \in (0,t_{1}] \\
C_{2}, & t \in (t_{1},t_{2}] \\
C_{3}, & t \in (t_{2},t_{3}] \\
\vdots
\end{cases}
\label{eq:step_reference}
\end{equation}

where $C_{1},C_{2},C_{3},\ldots$ are distinct constants representing different reference levels. When the error $e_1$ is successfully constrained within $(-\rho,\rho)$, any sudden jump in $x_{r}$ will inevitably cause a corresponding jump in $e_1$, making it exceed the error boundary and thus creating the aforementioned singularity problem.

\subsection{Control Objectives}
We now formally state the control objectives for system (\ref{eq:system}):

\begin{itemize}
\item For smooth reference trajectories (without jump discontinuities), ensure the tracking error remains strictly within prescribed boundaries continuously.
\item For discontinuous reference trajectories (with jump discontinuities), guarantee that:
\begin{enumerate}
\item The tracking error returns to prescribed boundaries within finite time after each jump.
\item The error remains bounded until subsequent reference jumps occur.
\end{enumerate}
\end{itemize}

	\section{Main Results}\label{sec:main_results}
    	\begin{figure}[!t]
		\centering
		\includegraphics[width=1\linewidth]{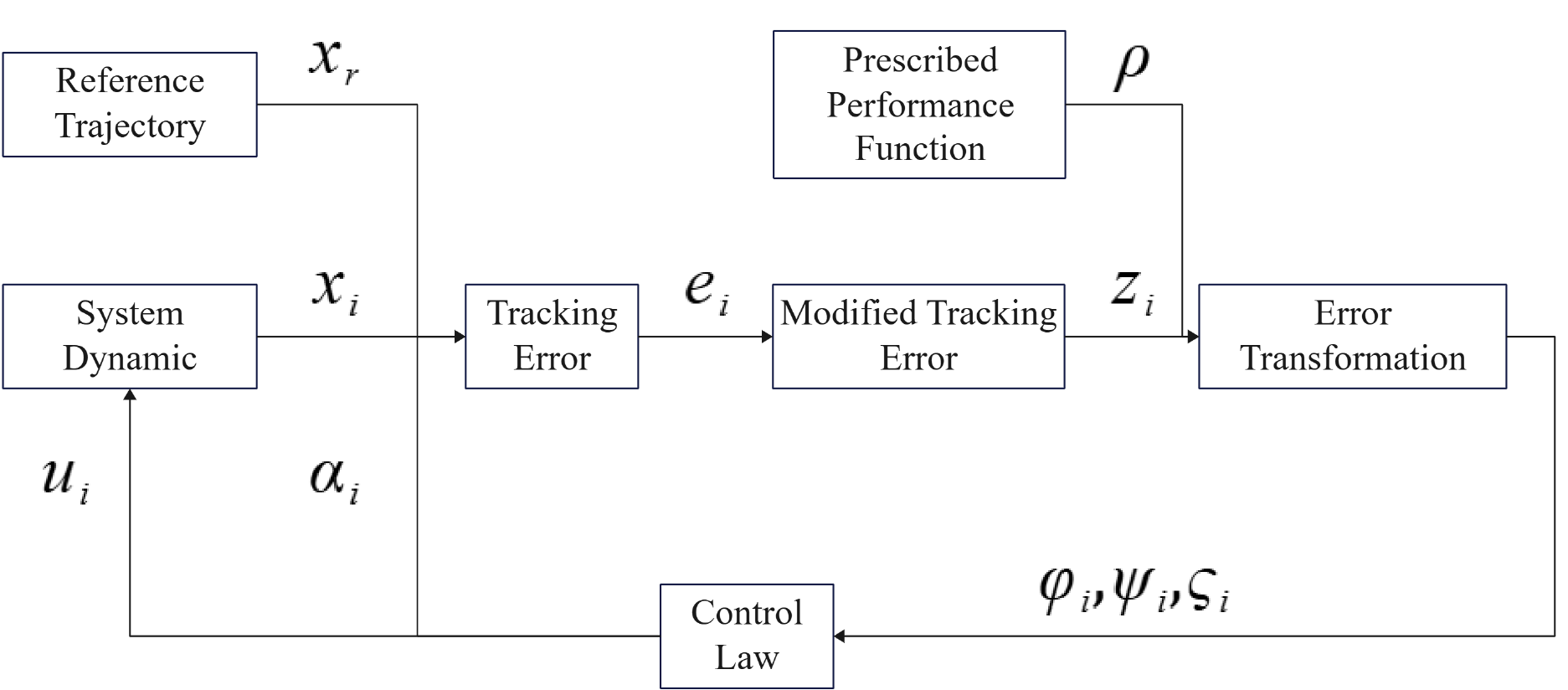}
		\caption{The overall framework.}
		\label{framework}
	\end{figure}
In this section, we will first define a transformation function to rectify the error definition, thereby addressing the discontinuity issue in the reference trajectory. Subsequently, based on the rectified error, we will design an error transformation and develop a controller accordingly. The overall framework is illustrated in Fig. \ref{framework}.

\subsection{Transformation Function Design}
To address the singularity issue, we first define a transformation function:

\begin{equation}
\mu(t) = 
\begin{cases}
0 & \text{if } t \leq 0, \\ 
\frac{1}{C}\int_{0}^{t}\exp\left(-\frac{1}{s(T-s)}\right)\mathrm{d}s & \text{if } 0 < t < T, \\ 
1 & \text{if } t \geq T,
\end{cases}
\label{eq:mu_function}
\end{equation}

where $C = \int_{0}^{T}\exp\left(-\frac{1}{s(T-s)}\right)\mathrm{d}s$ is the normalization constant. Thus, it can be readily shown that $\mu$ is strictly monotonically increasing on the interval $[0,T]$. Next, we prove its smoothness at $t=T$.

\begin{lemma}
The function $\mu(t)$ is smooth at $t=T$, meaning it is infinitely differentiable at this point.
\end{lemma}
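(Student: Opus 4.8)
The plan is to reduce the claim to the classical fact that the flat profile $\psi(x)=e^{-1/x}$ for $x>0$ and $\psi(x)=0$ for $x\le 0$ is infinitely differentiable on $\mathbb{R}$, with every derivative vanishing at the origin. First I would differentiate $\mu$ on the open interval $(0,T)$ by the fundamental theorem of calculus, obtaining $\mu'(t)=\tfrac{1}{C}\exp(-\tfrac{1}{t(T-t)})$, whereas the constant branch gives $\mu'(t)=0$ for $t>T$. The key observation is that both cases are captured by the single formula $\mu'(t)=\tfrac{1}{C}\,\psi(w(t))$, where $w(t)=t(T-t)$ is a polynomial with $w(T)=0$, $w(t)>0$ on $(0,T)$, and $w(t)<0$ for $t$ slightly above $T$; indeed $\psi(w(t))=0$ precisely where $w(t)\le 0$, which matches the constant branch.

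Second, I would establish the flatness of $\psi$ at the origin. The standard induction shows that for $x>0$ one has $\psi^{(k)}(x)=Q_k(1/x)\,e^{-1/x}$ for a polynomial $Q_k$, and then the estimate that $e^{-1/x}$ decays faster than any power of $1/x$ as $x\to 0^{+}$ yields $\lim_{x\to 0^{+}}\psi^{(k)}(x)=0=\psi^{(k)}(0)$. This exponential-dominates-polynomial bound is the technical heart of the proof; the remaining steps are bookkeeping.

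Third, since $\psi$ is $C^{\infty}$ and $w$ is a polynomial, the composition $\psi\circ w$ is $C^{\infty}$ near $t=T$ by the chain rule (Fa\`a di Bruno's formula). Hence $\mu'=\tfrac{1}{C}\,\psi\circ w$ is smooth across $t=T$, so $\mu$, as an antiderivative of a smooth function with $\mu(T)=1$, is infinitely differentiable at $t=T$. Moreover each higher derivative vanishes there: $\mu^{(k)}(T)=\tfrac{1}{C}(\psi\circ w)^{(k-1)}(T)=0$ for $k\ge 1$, because every term produced by Fa\`a di Bruno carries a factor $\psi^{(j)}(w(T))=\psi^{(j)}(0)=0$ with $j\ge 1$.

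The only genuinely non-trivial ingredient is the flatness of $\psi$ in the second step; once this is in hand, the continuity of $\mu$ at $T$ and the agreement of all one-sided derivatives follow automatically from the uniform representation $\mu'=\tfrac{1}{C}\,\psi\circ w$. A more hands-on alternative avoids $\psi$ entirely: prove by induction that the derivatives of $g(t)=\exp(-\tfrac{1}{t(T-t)})$ take the form $R_k(t)\,g(t)$ with $R_k$ rational in $t$ and $(T-t)$, bound $|R_k(t)|$ by a negative power of $(T-t)$ near $T$, and apply the same decay estimate to force $\lim_{t\to T^{-}}g^{(k)}(t)=0$. I would favor the composition approach, which isolates the delicate estimate inside one classical lemma.
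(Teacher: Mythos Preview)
Your argument is correct and in fact cleaner than the paper's own proof. The paper proceeds by the hands-on route you describe as your alternative: it writes the higher derivatives of $\mu$ on $(0,T)$ in the form $\dfrac{P_n(t)}{[t(T-t)]^{k_n}}\exp\!\bigl(-\tfrac{1}{t(T-t)}\bigr)$ and argues that these expressions tend to $0$ as $t\to T^{-}$, then matches them with the zero derivatives from the constant branch. Your primary approach instead packages the kernel as $\psi\circ w$ with $\psi(x)=e^{-1/x}$ for $x>0$, $\psi(x)=0$ for $x\le 0$, and $w(t)=t(T-t)$, so that both branches of $\mu'$ are represented by one globally defined $C^{\infty}$ function; smoothness of $\mu$ then follows from smoothness of an antiderivative of a smooth function. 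The gain is structural: you isolate the only delicate estimate inside the classical flatness of $\psi$, and you avoid the step the paper leaves implicit, namely that matching one-sided limits of $\mu^{(n)}$ actually forces $\mu^{(n)}(T)$ to exist (which requires an iterated mean-value argument). The paper's approach is more self-contained in that it does not invoke an auxiliary lemma, but yours is shorter and logically tighter.
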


\begin{proof}
For $0<t<T$, the $n$-th derivative of $\mu(t)$ satisfies:

\begin{equation}
\mu^{(n)}(t) = \frac{1}{C} \cdot \frac{\mathrm{d}^{n-1}}{\mathrm{d}t^{n-1}} \exp\left(-\frac{1}{t(T-t)}\right).
\label{eq:nth_derivative}
\end{equation}

The kernel function $\exp\left(-\frac{1}{t(T-t)}\right)$ is flat at $t=T$, i.e., all its derivatives vanish as $t\to T^{-}$. 

First, we prove continuity:

\begin{equation}
\lim_{t\to T^{-}}\mu(t) = \frac{\int_{0}^{T}\exp\left(-\frac{1}{s(T-s)}\right)\mathrm{d}s}{C} = 1.
\label{eq:continuity}
\end{equation}

Thus, $\mu$ is continuous at $t=T$.

Next, we prove first-order differentiability:

\begin{equation}
\lim_{t\to T^{-}}\mu'(t) = \lim_{t\to T^{-}}\frac{1}{C}\exp\left(-\frac{1}{t(T-t)}\right) = 0.
\label{eq:first_derivative}
\end{equation}

For higher-order derivatives ($n \geq 2$), we observe that:

\begin{equation}
\lim_{t\to T^{-}}\mu^{(n)}(t) = \lim_{t\to T^{-}}\frac{P_n(t)}{[t(T-t)]^{k_n}}\exp\left(-\frac{1}{t(T-t)}\right) = 0,
\label{eq:higher_derivatives}
\end{equation}

where $P_n(t)$ is a polynomial in $t$ and $t_n$ is a positive integer depending on $n$ . Therefore, all derivatives of $\mu$ exist and are continuous at $t=T$, which completes the proof.
\end{proof}

\begin{remark}
Compared with existing shift functions \cite{10574398,10375267}, most current implementations only guarantee $C^{1}$ or $C^{2}$ continuity. For high-order systems requiring higher derivatives of the shift function, the nonexistence of these derivatives at discontinuity points may generate impulse signals that could compromise system stability. The proposed $\mu$ function, being globally smooth ($C^{\infty}$), fundamentally avoids this issue.
\end{remark}
\subsection{Modified Tracking Error Definition}
When the reference trajectory undergoes abrupt changes, we explicitly know the jump time instants $t_i$, $i = 1, 2, 3, \cdots$. To avoid ambiguity, we specify that $t_i = +\infty$ when the reference trajectory $x_r$ contains no jump discontinuities. We then shift $\mu$ rightward along the time axis by $t_i$, obtaining $\mu(t - t_i)$. For notational simplicity, we define $\mu_i = \mu(t - t_i)$.

The modified tracking error can thus be defined as:

\begin{equation}
z_i = 
\begin{cases} 
e_i, & \text{if } t \in [0, t_1], \\
\mu_1 e_i, & \text{if } t \in (t_1, t_2], \\
\mu_2 e_i, & \text{if } t \in (t_2, t_3], \\
\vdots &
\end{cases}
\label{eq:modified_error}
\end{equation}

This formulation shows that $z_i = e_i$ when $x_r$ contains no jump discontinuities. Notably, since $t_i$ represents the jump instant, we can select $T$ such that $T < t_{i+1} - t_i$.

\begin{remark}
The multiplication of each $e_i$ by $\mu_i$ is crucial because reference jumps induce discontinuity propagation through all error channels via backstepping. Jumps in $x_r$ directly affect $e_1$, which modifies $\alpha_1$ and consequently alters $e_2$, with this effect propagating through all higher-order errors $e_i$. Therefore, the $\mu_i$ scaling must be applied at every error level.
\end{remark}

\subsection{Control Strategy Design}
We define the error boundary and coordinate transformation functions\cite{LIU2025107422}:

\begin{equation}
\left\{
\begin{aligned} 
\rho &= \begin{cases}
\tan\left(\frac{\pi}{2h}\right), & t \in (0,T_{1}], \\
c, & t \in (T_{1},+\infty),
\end{cases} \\
\frac{1}{h} &= \begin{cases}
l + (1-l)\cos^{2}\left(\frac{\pi t}{2T_{1}}\right), & t \in (0,T_{1}], \\
l, & t \in (T_{1},+\infty),
\end{cases} \\
\varphi_{i} &= \frac{2}{\pi}\arctan(z_{i}), \\
\psi_{i} &= \varphi_{i}h, \\
\varsigma_{i} &= \operatorname{artanh}(\psi_{i}).
\end{aligned}
\right.
\label{eq:transformation}
\end{equation}

where $T_{1}$ denotes the predefined time, $c$ represents the preset range, and $l = \frac{2}{\pi}\arctan(c)$. Following Lemma 2 in prior work, prescribed performance can be guaranteed by ensuring the boundedness of $\varsigma_{i}$.

The backstepping control strategy is formulated as:

\begin{equation}
\alpha_{i} = -k_{i}\varrho_{i}\varsigma_{i}
\label{eq:control_law}
\end{equation}

where $\varrho_{i} = \frac{1}{1-\psi_{i}^{2}}$, and $k_{i}$ is a designable positive constant. Specifically, $\alpha_{n} = u$ serves as the actual control input.

\begin{remark}
Unlike \cite{LIU2025107422}, our coordinate transformation guarantees the boundedness of $\varsigma_{i}$, which ensures $z_{i} \in (-\rho,\rho)$ rather than directly guaranteeing $e_{i} \in (-\rho,\rho)$. When the reference trajectory contains no jump discontinuities, this is completely equivalent to the direct control of $e_i$ in \cite{LIU2025107422}. However, when jump discontinuities exist in the reference trajectory, $z_i$ will always remain within the prescribed performance bounds. According to the transformation relationship between $z_i$ and $e_i$, it can be concluded that $e_i$ will return within the prescribed bounds within time $T$, thereby ensuring the recovery of PPC effectiveness within the prescribed time frame.
\end{remark}

\begin{remark}
Analogous to \cite{LIU2025107422}, we extend the results in \cite{LIU2025107422} to higher-order disturbed systems. Our control law maintains low complexity by excluding: (1) derivatives of virtual control laws $\alpha_{i}$, and (2) derivatives of $x_{r}$. This design resolves the complexity explosion problem and relaxes assumptions on $x_{r}$ by eliminating sensitivity to nondifferentiable points during reference jumps.
\end{remark}

\section{Stability Analysis}
\label{sec:stability}

\begin{theorem}
For the system (\ref{eq:system}), the proposed error transformation laws (\ref{eq:modified_error}), (\ref{eq:transformation}) and the control scheme (\ref{eq:control_law}) guarantee that:
\begin{enumerate}
\item When the reference signal contains no jump discontinuities, the error remains strictly within $(-\rho,\rho)$ at all times.
\item When the reference signal contains finite jump discontinuities, the error will ultimately return to $(-\rho,\rho)$.
\end{enumerate}
\end{theorem}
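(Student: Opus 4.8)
The plan is to run a barrier-Lyapunov argument in the transformed coordinate $\varsigma_i$, exploiting the fact that prescribed performance for $z_i$ reduces to mere boundedness of $\varsigma_i$. First I would record the algebraic chain implied by (\ref{eq:transformation}): since $\rho=\tan(\pi/(2h))$, one has $|z_i|<\rho \iff |\varphi_i|<1/h \iff |\psi_i|<1 \iff \varsigma_i$ finite. Hence it suffices to prove that each $\varsigma_i$ stays bounded on $[0,\infty)$. I would also note the well-posedness facts that make the scheme work: at $t=0$ we have $h=1$ and $\rho=+\infty$, so $\varsigma_i(0)$ is finite for any initial error (initial-condition independence), and at every jump instant $t_j$ the factor $\mu(t-t_j)$ vanishes, resetting $z_i$, and therefore $\varsigma_i$, to $0$; since a reset only lowers $V$, a bound obtained on one smooth sub-interval transfers to the next.

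Next I would take $V=\tfrac12\sum_{i=1}^{n}\varsigma_i^2$ and differentiate. Using $\dot\varsigma_i=\varrho_i\dot\psi_i$, $\psi_i=h\varphi_i$, $\varphi_i=\tfrac{2}{\pi}\arctan z_i$, and $z_i=\mu_{(j)}e_i$ on $(t_j,t_{j+1}]$, together with the error dynamics $\dot e_i=e_{i+1}+\alpha_i+d_i-\dot\alpha_{i-1}$ (with $\alpha_0=x_r$), substituting the control law $\alpha_i=-k_i\varrho_i\varsigma_i$ produces a dominant dissipative term proportional to $-\varrho_i^2\varsigma_i^2$ in $\varsigma_i\dot\varsigma_i$, while every remaining contribution — the disturbance $d_i$, the coupling $e_{i+1}$, the unmodeled derivative $\dot\alpha_{i-1}$, and the shift/boundary terms $\dot\mu_{(j)}e_i$ and $\varphi_i\dot h$ — enters only linearly in $\varrho_i\varsigma_i$. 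Because $\varrho_i=\cosh^2\varsigma_i$ blows up as $|\varsigma_i|\to\infty$ while $z_i$ remains finite at any fixed time, the dissipative term (of order $\cosh^4\varsigma_i\,\varsigma_i^2$) outgrows each linearly entering term (of order $\cosh^2\varsigma_i\,|\varsigma_i|$); completing the square via Young's inequality then yields $\dot V\le -aV+b$ on each smooth interval, which forbids finite-time escape of $V$ and so keeps every $\varsigma_i$ bounded, delivering $|z_i|<\rho$ for all $t$.

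The hard part is justifying that the collected perturbation is genuinely bounded, since it contains simultaneously the down-chain derivative $\dot\alpha_{i-1}$ (which the controller deliberately never computes) and the up-chain coupling $e_{i+1}$. My approach is a combined induction on the backstepping order carried inside the single sum-Lyapunov function: the base term $\dot\alpha_0=\dot x_r$ is bounded on each smooth sub-interval by Assumption 2, boundedness of $\varsigma_{i-1}$ propagates to boundedness of $\alpha_{i-1}$ and $\dot\alpha_{i-1}$, and the coupling terms $\varsigma_i e_{i+1}$ are absorbed against the neighboring dissipative terms $-\varrho_{i+1}^2\varsigma_{i+1}^2$ by Young's inequality across the sum, exactly as in standard backstepping. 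Two points need extra care: $d_i$ is handled directly by Assumption 1, and near each jump $\mu_{(j)}\to0$ makes $e_i=z_i/\mu_{(j)}$ potentially large, so I would invoke the $C^\infty$ flatness of $\mu$ at its left endpoint (Lemma 1) — all derivatives of $\mu$ vanish there, so $\dot\mu_{(j)}$ stays bounded and $\dot\mu_{(j)}e_i$ generates none of the impulsive spikes a merely $C^1$ or $C^2$ shift would. This is precisely where the smoothness of $\mu$ is indispensable rather than cosmetic, and I expect it to be the main obstacle.

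Finally I would translate the bound on $z_i$ back to $e_i$ to close both claims. For a smooth reference, $t_1=+\infty$ gives $z_i\equiv e_i$, so $e_i\in(-\rho,\rho)$ for all $t$, proving claim 1. For a reference with finite jumps, the design constraint $T<t_{j+1}-t_j$ guarantees $\mu_{(j)}(t)=1$ for all $t\in[t_j+T,\,t_{j+1}]$, whence $e_i=z_i\in(-\rho,\rho)$ on that sub-interval; thus after each jump the true error re-enters the prescribed tube within the fixed time $T$ and remains there until the next jump, proving claim 2 with explicit recovery time $T$.
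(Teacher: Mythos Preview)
Your proposal follows essentially the paper's own route: a barrier-Lyapunov argument on $\varsigma_i$ showing the dissipative term dominates a bounded perturbation, then inversion of the transformation chain to get $|z_i|<\rho$, and finally the observation that $\mu_{(j)}\equiv 1$ on $[t_j+T,\,t_{j+1}]$ to recover $e_i$. The paper takes each $V_i=\tfrac12\varsigma_i^2$ separately, lumps every residual --- including $e_{i+1}$ and $\dot\alpha_{i-1}$ --- into a single term $H_i$ which it declares bounded via the Weierstrass theorem, and then concludes $\dot V_i<0$ whenever $|\varsigma_i\varrho_i|>\sigma_i/K_i$; your sum-Lyapunov with an explicit induction on the backstepping order and Young's inequality across neighboring channels is a more honest way to justify that same boundedness, and your reset observation $V(t_j^+)=0$ together with the initial-condition remark at $t=0$ are well-posedness points the paper leaves implicit.

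Two small corrections are worth making. First, the inequality $\dot V\le -aV+b$ cannot hold with a \emph{uniform} $a>0$, because the dissipative coefficient carries the factor $\mu_{(j)}$ which vanishes at each jump (the paper's $K_i$ suffers from the same defect). Your own reset observation is exactly what rescues this: since $V(t_j^+)=0$, no dissipation is needed while $\varsigma_i$ is still near zero, and once $\mu_{(j)}$ is bounded away from zero the estimate goes through on the remainder of the sub-interval. Second, your flatness argument is aimed at the wrong target: $e_i=x_i-\alpha_{i-1}$ is the primitive quantity and stays finite across jumps regardless, so $\dot\mu_{(j)}e_i$ is bounded for any $C^1$ shift function. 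The $C^\infty$ smoothness of $\mu$ is relevant not for boundedness of that particular term but for the existence and continuity of nested derivatives in more general high-order designs (cf.\ Remark~1); for the specific low-complexity law (\ref{eq:control_law}) only $\dot\mu$ ever enters the analysis, so $C^1$ already suffices for the stability proof itself.
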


\begin{proof}
Consider the Lyapunov function candidate:
\begin{equation}
V_{i} = \frac{1}{2}\varsigma_{i}^{2}
\label{eq:lyapunov}
\end{equation}

Taking its time derivative yields:
\begin{equation}
\dot{V}_{i} = \varsigma_{i}\varrho_{i}\left(-\frac{2}{\pi}\frac{1}{1+z_{i}^{2}}\mu_{i}hk_{i}\varrho_{i}\varsigma_{i}\right) + \varsigma_{i}\varrho_{i}H_{i}
\label{eq:vdot}
\end{equation}

where
\begin{equation}
H_{i} = \frac{2}{\pi}\frac{1}{1+z_{i}^{2}}\dot{\mu}_{i}e_{i}h + \frac{2}{\pi}\frac{1}{1+z_{i}^{2}}\mu_{i}h(e_{i+1}+d_{i}-\dot{\alpha}_{i-1}+r_{i}\dot{h}).
\label{eq:Hi}
\end{equation}

Notably, while $H_{i}$ becomes discontinuous at jump points in $x_{r}$, it remains piecewise continuous since $x_{r}$ contains only finite jump points. Applying the Weierstrass theorem, we obtain:
\begin{equation}
|H_{i}| < \sigma_{i}
\label{eq:Hi_bound}
\end{equation}
where $\sigma_{i}$ is an unknown positive constant.

Since all terms in $-\frac{2}{\pi}\frac{1}{1+z_{i}^{2}}\mu_{i}hk_{i}$ are known, we can establish:
\begin{equation}
\varsigma_{i}\varrho_{i}\left(\frac{2}{\pi}\frac{1}{1+z_{i}^{2}}\mu_{i}hk_{i}\varrho_{i}\varsigma_{i}\right) \geq K_{i}(\varsigma_{i}\varrho_{i})^{2}
\label{eq:main_term}
\end{equation}
with $K_{i}$ satisfying $K_{i} < \frac{2}{\pi}\frac{1}{1+z_{i}^{2}}\mu_{i}hk_{i}$.

The derivative expression thus simplifies to:
\begin{equation}
\dot{V}_{i} \leq -K_{i}(\varsigma_{i}\varrho_{i})^{2} + \sigma_{i}\varsigma_{i}\varrho_{i}
\label{eq:vdot_simplified}
\end{equation}

Clearly, when $|\varsigma_{i}\varrho_{i}| > \frac{\sigma_{i}}{K_{i}}$, $\dot{V}_{i} < 0$. This proves the boundedness of both $\varsigma_{i}$ and $\varrho_{i}$. The boundedness of $\varsigma_{i}$ directly implies $z_{i}\in(-\rho,\rho)$.

When $x_{r}$ contains no jump discontinuities, $z_{i}\equiv e_{i}$, immediately proving $e_{i}\in(-\rho,\rho)$. When $x_{r}$ contains jump discontinuities, recalling the definition of $\mu_{i}$, we observe that after time $T$ following each jump, $z_{i}=e_{i}$ until the next reference signal jump occurs. This guarantees that $e_{i}$ will return to the prescribed boundary within $T$ after each reference signal jump.

Extending this analysis to the limit as $t\rightarrow+\infty$, we observe that
\begin{equation}
\left\|\psi_{i}+\frac{1}{d_{S}(\psi_{i},t)+\partial S\psi_{i}}\right\| < +\infty
\label{eq:asymptotic}
\end{equation}
holds even in the asymptotic case. According to Theorem 2 in \cite{VERGINIS2019538}, it can be  demonstrated that $\varsigma_{i}$ remains bounded as $t\rightarrow+\infty$. Consequently, we conclude that $z_{i}\in(-\rho,\rho)$ for all time, which completes the proof.
\end{proof}

	\section{Simulation Verification}\label{sec5}

To validate the correctness and effectiveness of the proposed control scheme, we consider the following second-order system:

\begin{equation}
\begin{cases}
\dot{x}_{1} = x_{2} + d_{1} \\ 
\dot{x}_{2} = u + d_{2}
\end{cases}
\end{equation}

For the error boundary function, we set $T_{1} = 1$ and $c = 0.1$. For $\mu$, we set $T = 2$. The control gains are chosen as $k_{i} = k_{2} = 50$. The reference trajectory is designed as:

\begin{equation}
x_{r} = 
\begin{cases}
\sin(t), & \text{if } t \in [0,3) \\ 
\sin(t) + 0.5, & \text{if } t \in [3,6) \\ 
\sin(t) - 0.5, & \text{if } t \in [6,10]
\end{cases}
\end{equation}

	\begin{figure}[!t]
		\centering
		\includegraphics[width=1\linewidth]{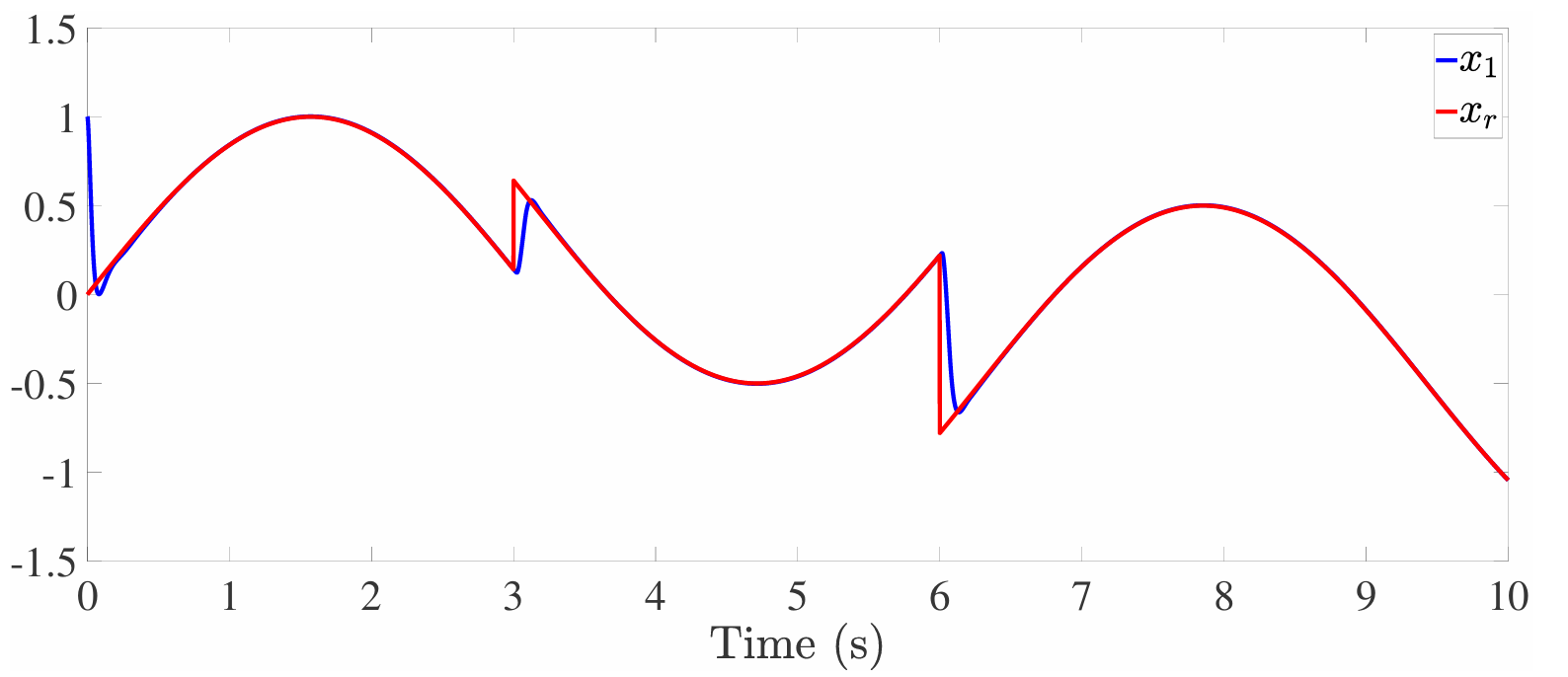}
		\caption{The first order state $x_1$.}
		\label{fig1}
	\end{figure}
    	\begin{figure}[!t]
		\centering
		\includegraphics[width=1\linewidth]{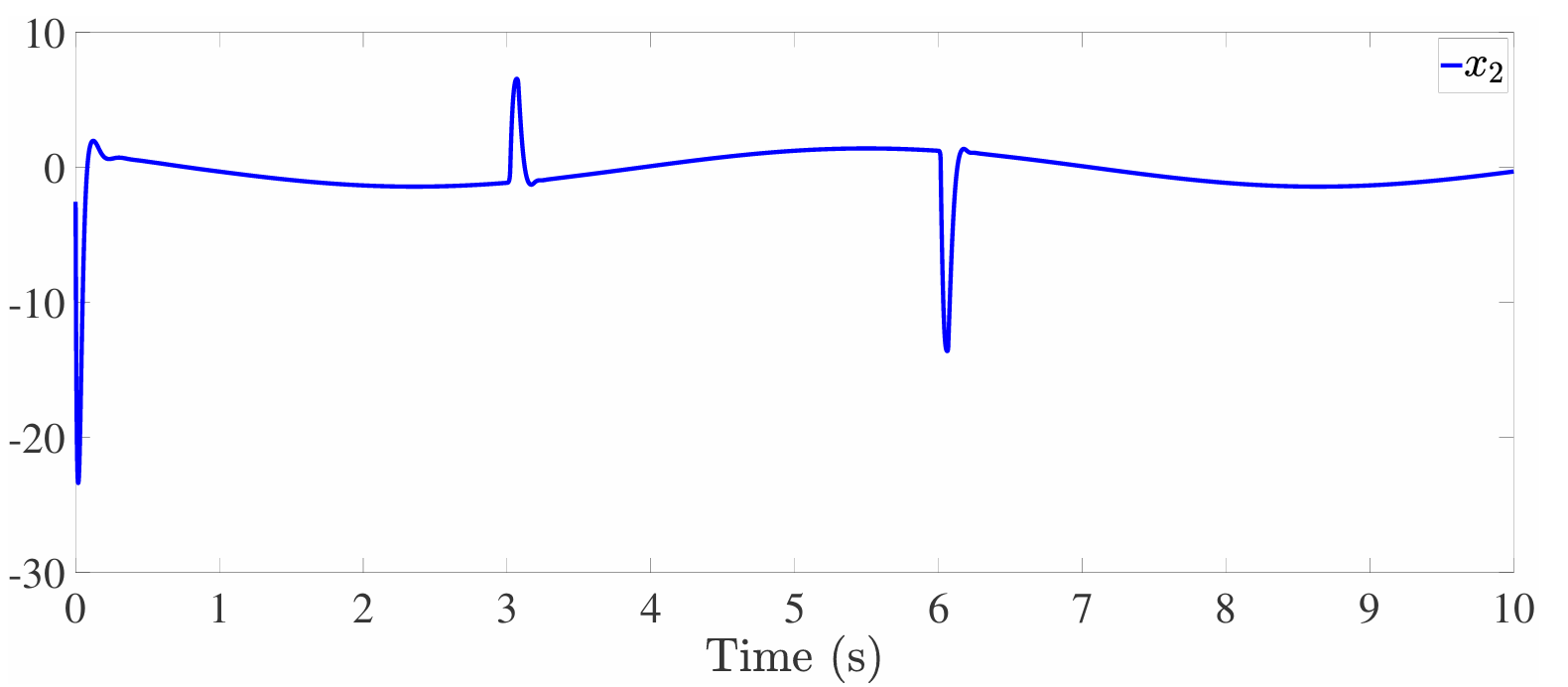}
		\caption{The second order state $x_2$.}
		\label{fig2}
	\end{figure}
    	\begin{figure}[!t]
		\centering
		\includegraphics[width=1\linewidth]{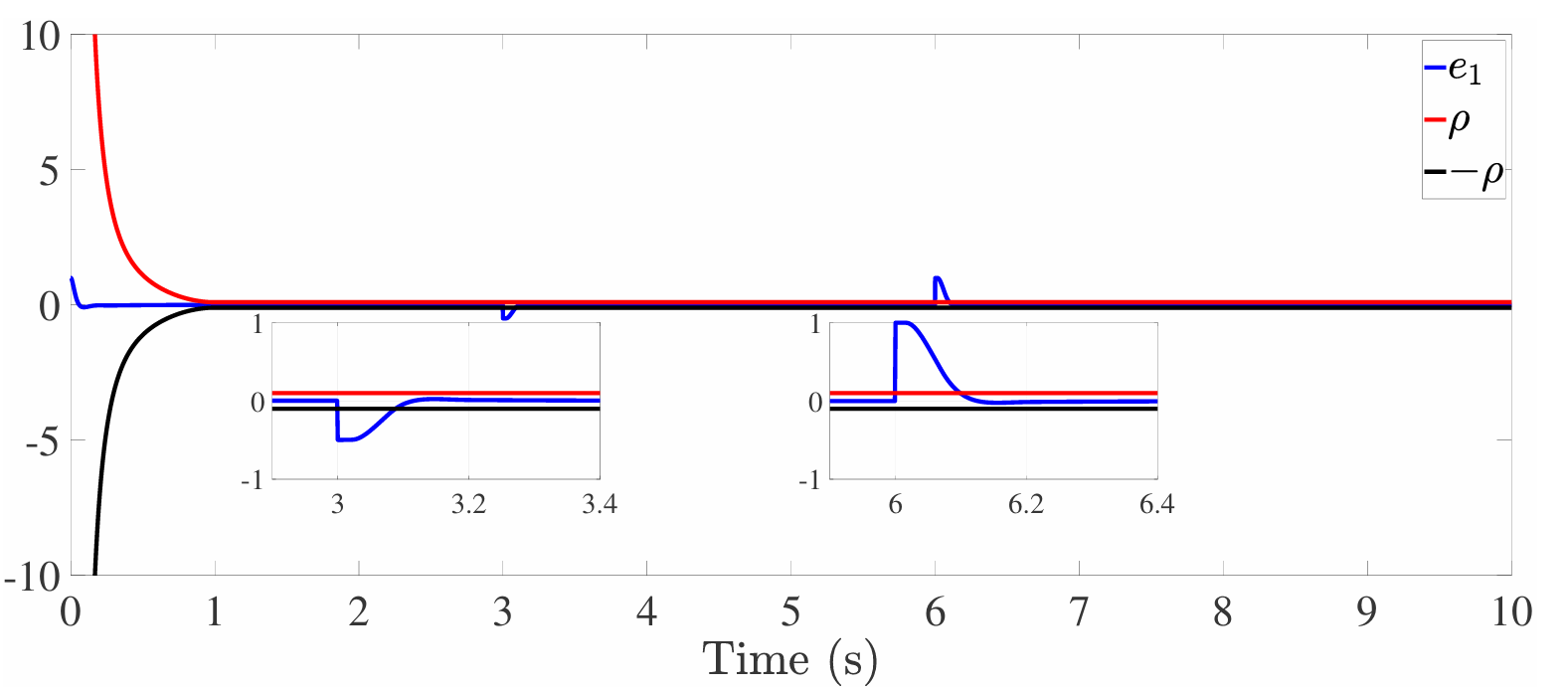}
		\caption{The tracking error $e_1$}
		\label{fig3}
	\end{figure}
    	\begin{figure}[!t]
		\centering
		\includegraphics[width=1\linewidth]{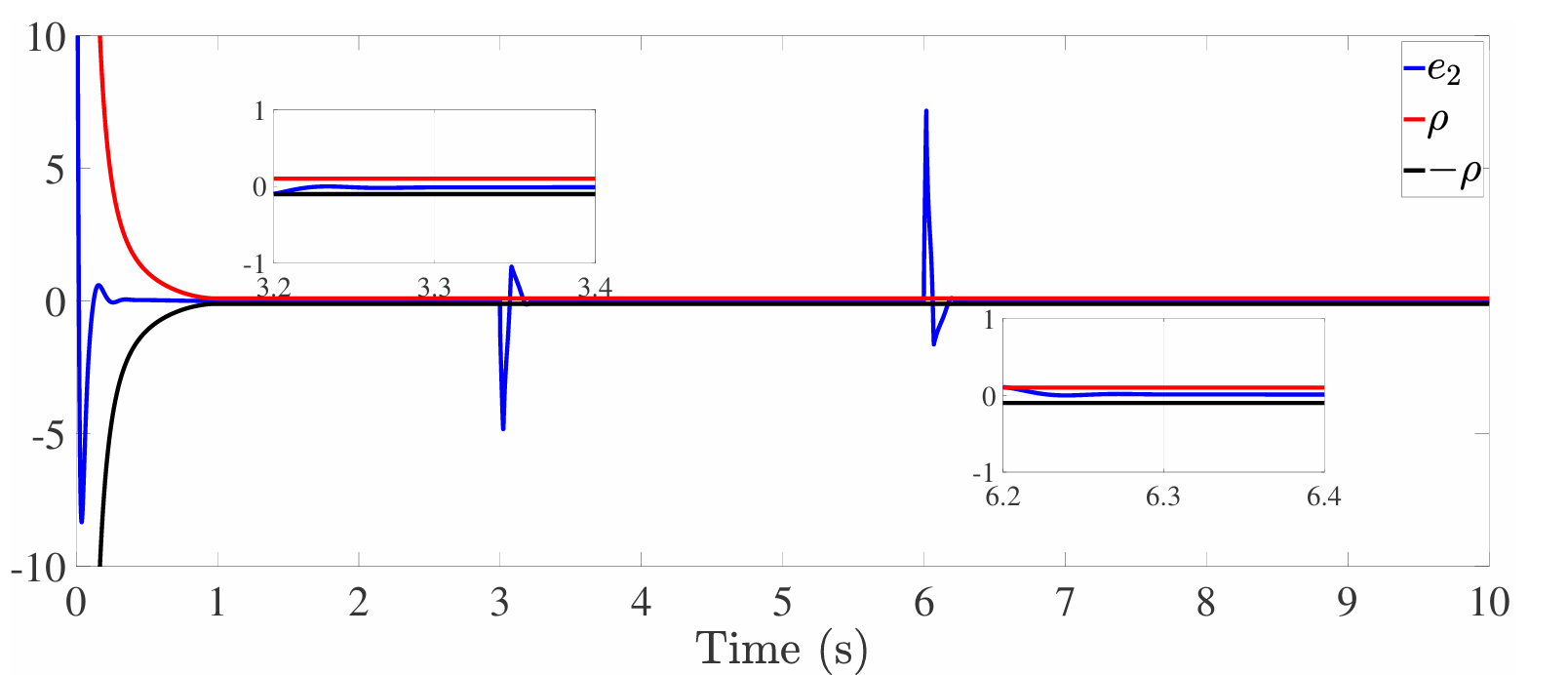}
		\caption{The virtual error $e_2$}
		\label{fig4}
	\end{figure}
    
The external disturbances are given by $d_{1} = \sin(t)$ and $d_{2} = 2\sin(t)$. The simulation results are shown in Figs. \ref{fig1}-\ref{fig4}. First, Fig. \ref{fig1} demonstrates successful achievement of the control objective - even when the reference trajectory undergoes abrupt changes, $x_{1}$ (the system output) can still rapidly track the reference trajectory.

Fig. \ref{fig2} reveals that $x_{2}$ exhibits jumps coinciding with reference trajectory discontinuities, which results from the control input mechanism. When reference jumps occur, the controller naturally generates large $u$ values to drive the system output back to $x_{r}$. Specifically, the jump in $e_{1}$ causes a corresponding jump in $\alpha$, which leads to a jump in $e_{2}$, consequently producing a jump in $u$ to ensure fast tracking.

Fig. \ref{fig4} validates this analysis. From Figs. \ref{fig3} and \ref{fig4}, we observe that errors remain within prescribed boundaries during steady-state periods, and even after reference jumps, both $e_{1}$ and $e_{2}$ return to their boundaries. These results conclusively prove the correctness and effectiveness of the proposed error transformation law.

\section{Conclusion}\label{sec6}

This paper examines the control problem for high-order strict-feedback systems subject to disturbances. We introduce a globally smooth transformation function designed to mitigate issues arising from nonexistent high-order derivatives. Building on this function, we propose an innovative error transformation strategy that ensures: (1) error recovery within a specified timeframe following abrupt changes in reference trajectories, and (2) the prevention of singularity issues during error transients. The proposed control strategy is characterized by its low complexity, effectively addressing external disturbances and resolving the complexity explosion problem without the need for additional mechanisms. Looking ahead, future research will focus on addressing singularity issues that may arise from error jumps during cyber attacks, further enhancing the robustness and resilience of the control system.

	\bibliographystyle{ieeetr}
	\bibliography{Reference.bib}
\end{document}